\newtheorem{prop}{Proposition}
\newcommand{\h} {\mathbf{h}}
\newcommand{\Th} {\mathbf{\Theta}}
\newcommand{\G} {\mathbf{G}}
\newcommand{\w}{\mathbf{w}}
\newcommand{\vv}{\mathbf{v}}
\DeclareMathOperator*{\st}{s.t.}
\DeclareMathOperator*{\Tr}{Tr}
\definecolor{orange}{RGB}{255,107,0}
\definecolor{green}{RGB}{0,160,20}
\begin{document}
\title{Energy-Efficient Design of IRS-NOMA Networks}
\author{Fang Fang,~\IEEEmembership{Member,~IEEE}, Yanqing Xu,~\IEEEmembership{Member,~IEEE}, Quoc-Viet Pham,~\IEEEmembership{Member,~IEEE}, \\and Zhiguo Ding,~\IEEEmembership{Fellow,~IEEE}
\thanks{Fang Fang is with the Department of Engineering, Durham University, Durham DH1 3LE, U.K.(e-mail: fang.fang@durham.ac.uk).}
\thanks{Zhiguo Ding is with School of Electrical and Electronic Engineering, The University of Manchester, M13 9PL, UK (e-mail: fang.fang@manchester.ac.uk and zhiguo.ding@manchester.ac.uk).}
\thanks{Yanqing Xu is with the School of Science and Engineering, The Chinese University of Hong Kong, Shenzhen, Shenzhen 518172, China. (e-mail: xuyanqing91@gmail.com).}
\thanks{Quoc-Viet Pham is with Research Institute of Computer, Information and Communication, Pusan National University, Busan 46241, South Korea (e-mail: vietpq@pusan.ac.kr).}
}\maketitle

\begin{abstract}
Combining intelligent reflecting surface (IRS) and non-orthogonal multiple access (NOMA) is an effective solution to enhance communication coverage and energy efficiency. In this paper, we focus on an IRS-assisted NOMA network and propose an energy-efficient algorithm to yield a good tradeoff between the sum-rate maximization and total power consumption minimization. We aim to maximize the system energy efficiency by jointly optimizing the transmit beamforming at the BS and the reflecting beamforming at the IRS. Specifically, the transmit beamforming and the phases of the low-cost passive elements on the IRS are alternatively optimized until the convergence. Simulation results demonstrate that the proposed algorithm in IRS-NOMA can yield superior performance compared with the conventional OMA-IRS and NOMA with a random phase IRS.
\end{abstract}
\vspace{-1em}
\section{Introduction}
Intelligent reflecting surface (IRS) has been envisioned as a revolutionary technology in the beyond fifth-generation (B5G) wireless network \cite{QWuMag2020}. Compared to the conventional wireless relaying technology, which regenerates and retransmits signals, IRS only reflects  signals as well as operating in a full-duplex mode with low energy consumption. By adjusting the phases of the low-cost passive reflecting elements integrated on the IRS, the reflected signal propagation can be collaboratively modified to improve the communication coverage, throughput and energy efficiency \cite{IRSMag2020,QWuTWC2019}. 

Non-orthogonal multiple access (NOMA) is considered as a key technology in B5G due to its high spectral efficiency \cite{KYangNOMAMag19} and high energy efficiency \cite{FangIEEETrans16}. Motivated by the advantages of IRS and NOMA, IRS has been proposed to combine with NOMA  \cite{ASIRSNOMAMag,BZhengIRSNOMA2020}. In this paper, we mainly focus on the downlink multiple-input-single-output (MISO) IRS-assisted NOMA network. Different from the sum-rate maximization \cite{GYangIRSNOMA2020,XMuIRSNOMA2019} and transmit power minimization \cite{JZhuIRSNOMA2020}, our goal is to achieve the optimal tradeoff between sum-rate maximization and power minimization in the downlink MISO IRS-NOMA network. We aim to maximize the amount of transmitted data bits per Hz with unit energy, which is measured by energy efficiency, an important performance metric for green communications \cite{JAnTWC2020,CHuangEEIRS}. By jointly optimizing the transmit beamforming at the base station (BS) and the reflected beamforming at the IRS, we propose an efficient algorithm to achieve the maximum energy efficiency of the IRS-NOMA network. 


\begin{figure}[t]
	\centering
	\graphicspath{{./figures/}}
	\includegraphics[width=0.8\linewidth]{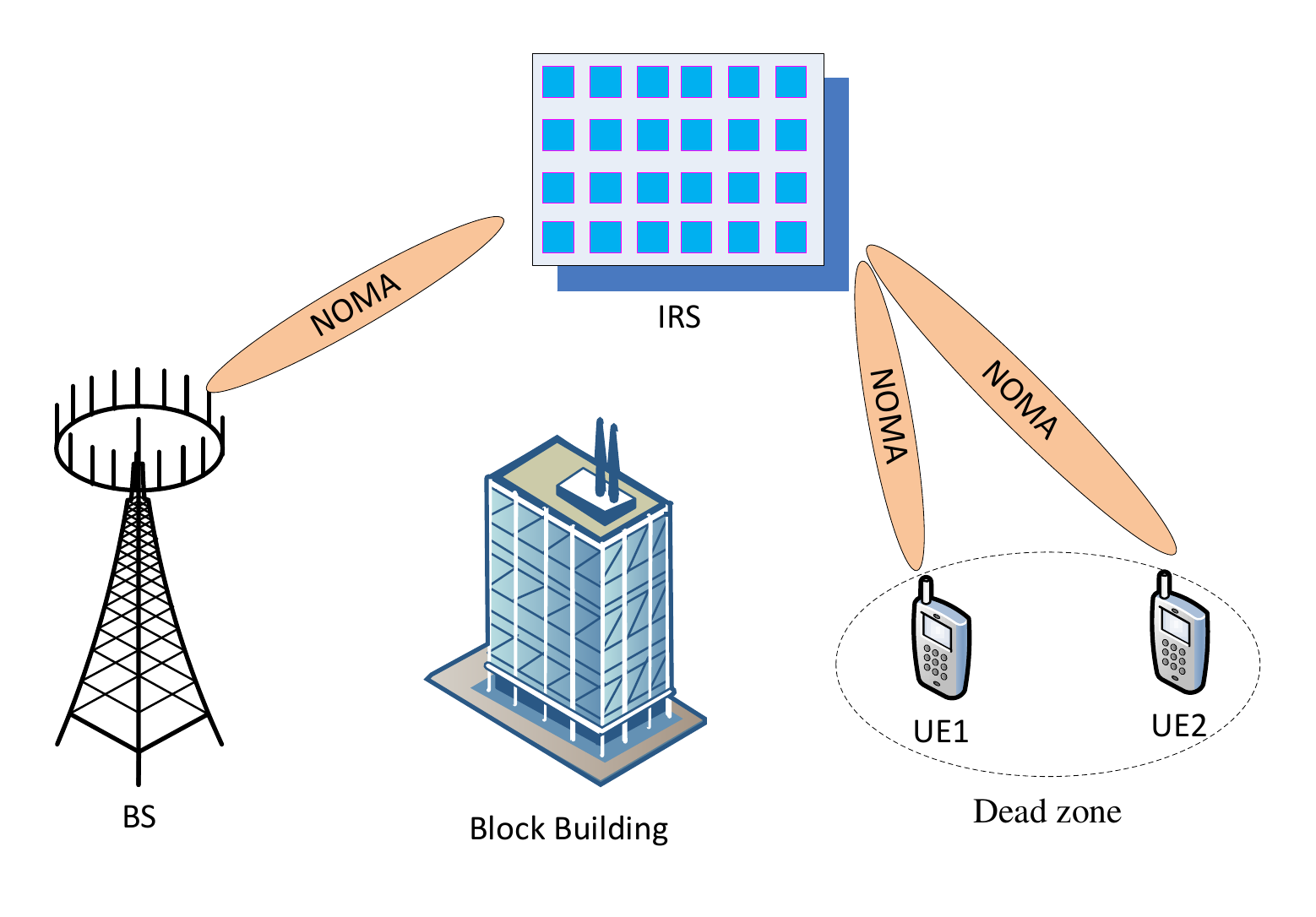}\\
	\caption{An IRS-NOMA MISO system.} \label{Fig0}
\end{figure}
\section{System Model and Problem Formulation}
\subsection{IRS Assisted NOMA}
As shown in Fig. 1, we consider a downlink MISO IRS-NOMA network, where an IRS composed of $N$ reflecting elements is implemented to assist the BS equipped with $M$ antennas to transmit signals to dead-zone single-antenna users. Due to the decoding complexity of successive interference cancellation (SIC) technology, we consider the number of users is two, $U_k, k=1,2$. In this system, the dead zone users cannot be served by the BS through direct links between the BS and users. We assume that the BS knows the perfect channel state information. Define $\Th=\text{diag}(\beta_1e^{j\theta_1},\cdots,\beta_Ne^{j\theta_N})$ as the reflection matrix for IRS where $\beta_{n}$ and $\theta_{n}$ respectively denote the amplitude reflection coefficient and the reflection phase shift. In this scenario, we adopt fixed amplitude reflection coefficients $\beta_n=1, \forall n$. The channel gains from the BS to the IRS and from the IRS to $U_k$ are denoted by $\G \in \mathbb{C}^{N\times M}$ and $\h_{k,r}^H \in \mathbb{C}^{1\times N}$, respectively, where $\h^H$ denotes the conjugate transpose of $\h$. Without loss of generality, the channel gain of the two users can be sorted as $|\h_{1,r}^H\Th\G|\geq|\h_{2,r}^H\Th\G|$. The BS broadcasts $\w_1s_1+\w_2s_2$, where $\w_m$ is the beamforming vector for $U_k, k\in\{1,2\}$, and $s_k$ is the information-bearing signal for $U_k$. Thus the received signals at $U_k$ can be respectively expressed as
\begin{equation}
	y_k=\h_{k,r}^H\Th\G(\w_1s_1+\w_2s_2)+n_k,
\end{equation}
where $n_k \sim {\mathcal{CN}}(0,\sigma^2)$ is the additive white Gaussian noise at $U_k$. Assume that the decoding order is ($U_2$, $U_1$). To ensure the performance of SIC, $U_1$ needs to successfully decode the signal of $U_2$. Hence the signal-interference-plus-noise ratio (SINR) to decode $U_2$'s message at $U_1$ is given by
\begin{equation}
	\Gamma_{2,1}=\frac{|\h_{1,r}^H\Th\G\w_2|^2}{|\h_{1,r}^H\Th\G\w_1|^2+\sigma^2}.
\end{equation} 
Define the SINR of $U_k$ as $\Gamma_k=\min\{\Gamma_{k,i}\}, \ \forall i\leq k$. The achievable rates of these two users can be written as 
\cite{ZChenTSP16MISONOMA}
\begin{subequations}
\begin{align}
	&R_1=\log_2(1+\Gamma_{1}),\\
	&R_2=\min\{\log_2(1+\Gamma_{2,2}),\ \log_2(1+\Gamma_{2,1})\} \label{R2},
\end{align}
\end{subequations}
where
\begin{equation}
\Gamma_{1}=\frac{|\h_{1,r}^H\Th\G \w_1|^2}{\sigma^2}\ \text{and}\ 	\Gamma_{2,2}=\frac{|\h_{2,r}^H\Th\G\w_2|^2}{|\h_{2,r}^H\Th\G\w_1|^2+\sigma^2}.
\end{equation}
\vspace{-0.8cm}
\subsection{Problem formulation}
 Let us define $\eta\in [0,1]$ as the power amplifier efficiency at the BS and denote the total circuit power at the BS by $P_c=MP_d+P_0$ and where $P_d$ is the dynamic power consumption and $P_0$ is the static power consumption. In the IRS-NOMA system, we aim to maximize the system energy efficiency, which is defined as a ratio of the system sum rate and the total power consumption. Considering the individual data rate constraints and the total transmit power budget, the energy efficiency maximization problem can be formulated as
\begin{subequations}\label{Prob:EE_1}
\begin{align}
\mathop {\max }\limits_{\Th,\w_1,\w_2} \quad & \frac{R}{\frac{1}{\eta}\sum\limits_{k=1}^2\|\w_k\|^2+P_c}\\
 \st\ \quad  & R_k\geq R_{k,\min}, k=1,2 \label{QoS_Cons},\\
& \sum\limits_{k=1}^2\|\w_k\|^2 \leq P_{\max}, \label{Pmax_Cons}\\
& 0\leq\theta_n\leq2\pi, n=1,\cdots,N, \label{theta_Cons}
\end{align}
\end{subequations}
where $R=R_1+R_2$, and $R_{k,\min}=\log_2(1+\Gamma_{k,\min})$ is the minimum data rate requirement for $U_k$, and where $\Gamma_{k,\min}=2^{R_{k,\min}}-1$ is the minimum SINR for $U_k$, which is a known parameter. Constraint \eqref{QoS_Cons} guarantees the QoS requirement for each user. Constraint \eqref{Pmax_Cons} limits the transmit power to $P_{\max}$. Constraint \eqref{theta_Cons} specifies the range of phase shift. However, $R_k$ is not jointly concave with respect to $\Th$ and $\w$. It is challenging to obtain the globally optimal solution to problem \eqref{Prob:EE_1} due to its non-convexity.
\vspace{-0.8em}
\section{Alternating Optimization Solution}
 In this section, an alternating optimization-based algorithm is proposed to solve problem \eqref{Prob:EE_1} efficiently. Specifically, we decouple problem \eqref{Prob:EE_1} into beamforming optimization and phase shift optimization subproblems, and then solve the subproblems alternatively. Even though the alternating algorithm is widely used in the existing works \cite{QWuTWC2019,GYangIRSNOMA2020,JZhuIRSNOMA2020}, the proposed solutions to beamforming optimization and phase optimization in this work are different from the existing algorithms. In particular, the sequential convex approximation and successive convex approximation (SCA) are exploited to optimize $\w$, while a lower bound approximation and semi-definite relaxation (SDR) techniques are used to optimize $\Th$.  

\subsection{Beamforming Optimization}
For given phase shift $\Th$, problem \eqref{Prob:EE_1} is still non-convex. Inspired by sequential convex programming \cite{HMCuman2019}, we introduce a slack variable $t$ and equivalently transform problem \eqref{Prob:EE_1} as
\begin{subequations}\label{Prob:EE_t}
\begin{align}
\mathop {\max }\limits_{\w_1,\w_2,t} \quad &t,\\
 \st\ \quad  &\frac{R}{\frac{1}{\eta}\sum\limits_{k=1}^2\|\w_k\|^2+P_c}\geq t, \label{ConEE_t}\\
 &\eqref{QoS_Cons},\eqref{Pmax_Cons}.
\end{align}
\end{subequations}
To deal with the non-convex set \eqref{ConEE_t}, we introduce another slack variable $\rho$, then constraint \eqref{ConEE_t} can be equivalently replaced by
\begin{subequations}\label{ConEE_t_eq}
\begin{align}
	&R\geq t\rho, \label{Con_tp}\\
	&\frac{1}{\eta}\sum\limits_{k=1}^2\|\w_k\|^2+P_c\leq \rho. \label{pp}
\end{align}
\end{subequations}
Note that the equivalence between \eqref{ConEE_t} and \eqref{ConEE_t_eq} can be guaranteed since \eqref{ConEE_t_eq} must hold with equality at the optimum. To further track the convexity of constraint \eqref{Con_tp}, a set of new slack variables $\bm{\gamma}=[\gamma_1,\gamma_2]^T$ is introduced. Then constraint \eqref{Con_tp} can be expressed as
 \begin{subequations}\label{8}
\begin{align}
	&\sum\limits_{k=1}^2\log_2(\gamma_k)\geq t\rho, \label{Cons_Rk}\\
	&1+\Gamma_k\geq \gamma_k, k=1,2.\label{Cons_r}
\end{align}
\end{subequations}
By introducing another variable $\bm{\delta}=[\delta_1,\delta_2]^T$, constraint \eqref{Cons_Rk} can be equivalently represented by
\begin{subequations}\label{9}
\begin{align}
	&\sum\limits_{k=1}^2\delta_k\geq t\rho, k=1,2, \label{Cons_tp_2} \\
	&\gamma_k\geq 2^{\delta_k}, k=1,2.\label{con_gamma} \end{align}
\end{subequations}
Based on \eqref{8} and \eqref{9}, constraint \eqref{Con_tp} can be equivalently transformed to 
\begin{eqnarray}
\begin{aligned}\label{7a_2}
	\eqref{Con_tp}\Leftrightarrow\left\{\begin{array}{lr}
           \text{(8b):}\   1+\Gamma_k\geq \gamma_k,   \\    \text{(9b):}\          \gamma_k\geq 2^{\delta_k},\\
             \text{(9a):}\ \sum\limits_{k=1}^2\delta_k\geq t\rho. 
           \end{array}
             \right.
\end{aligned}
\end{eqnarray}
It is obvious that constraint \eqref{pp} is a convex set since it can be written as a second-order cone (SOC) representation:
\begin{equation}
\begin{aligned}
	\frac{\eta\rho-\eta P_c+1}{2}\geq \left\|\left[\frac{\eta\rho-\eta P_c-1}{2},\w_1^T,\w_{2}^T\right]^T        
\right\|_2.
\end{aligned}
\end{equation}
 By introducing another variable $\bm{\beta}=\{\beta_{i,k}\},(i\leq k, k=1,2)$, constraint \eqref{Cons_r} can be relaxed to
\begin{subequations}\label{Con_hw}
	\begin{align}
	&|\h_{i}^H \w_k|^2\geq (\gamma_k-1)\beta_{i,k},\  i\leq k, k=1,2,\label{12a}\\ 
	& |\h_{i}^H \w_{k-1}|^2+\sigma^2\leq \beta_{i,k},\  i\leq k, k=1,2,\label{12b} \\ \nonumber
	\end{align}
\end{subequations}
where $\h_1^H=\h_{1,r}^H\Th\G$ and $\h_2^H=\h_{2,r}^H\Th\G$. An arbitrary phase rotation of the beamforming vectors can be added to make the imaginary part of $\h_i^Hw_k$ to be zero, which does not affect the value of SINR. Thus $\h_i^Hw_k$ can be chosen to be real. The inequalities \eqref{12a} can be rewritten as
\begin{equation}
		\Re(\h_i^H\w_k)\geq \sqrt{(\gamma_k-1)\beta_{i,k}},\ \Im(\h_{i}^H\w_k)=0,\  i\leq k, k=1,2 \label{gammabeta}
\end{equation}
and the inequalities \eqref{12b} can be rewritten as
\begin{equation}
		\beta_{i,k}\geq |\h_{i}^H \w_{k-1}|^2+\sigma^2,\ i\leq k, k=1,2 \label{beta}
\end{equation}
where $|\h_{i}^H \w_{i-1}|^2=0$ when $i=1$. Similar to constraint \eqref{pp}, constraint \eqref{QoS_Cons} can be reformulated as SOC:
\begin{equation}
	\frac{\Re(\h_{i}^H\w_k)}{\Gamma_k^{\min}}\geq \left\|\begin{array}{ccr}
\h_{i}^H[\w_1,\cdots,\w_{k-1}]        \\
   \sigma^2
\end{array}\right\|_2, \Im(\h_{i}^H\w_k)=0.\label{15}
\end{equation}
An arbitrary phase rotation can be added to $\w$, which will not affect the optimality of the solution. Therefore, problem \eqref{Prob:EE_t} can be equivalently transformed to 
\begin{subequations}\label{Prob:EE_t3}
\begin{align}
\mathop {\max }\limits_{t,\rho,\w,\bm{\gamma},\bm{\delta},\bm{\beta}}  &\quad t\\
 \st\quad  &\Re(\h_i^H\w_k)\geq \sqrt{(\gamma_k-1)\beta_{i,k}}, i\leq k, k=1,2,
\label{Nonconvex1}\\
 &\sum\limits_{k=1}^2\delta_k\geq t\rho, k=1,2,  \label{Nonconvex2}\\
 &\eqref{15},\eqref{Pmax_Cons},\eqref{pp},\eqref{con_gamma},\eqref{beta}.\label{Convex_Cons1}\\\nonumber
\end{align}
\end{subequations}
Next, we analyze the convexity of constraints. We note that the constraints in \eqref{Convex_Cons1} are convex while constraints \eqref{Nonconvex1} and \eqref{Nonconvex2} are non-convex. In the following, we propose to use SCA to transform the non-convex constraints to convex approximation expressions. Performing the first-order Taylor approximation, constraint \eqref{Nonconvex1} can be written as
\begin{equation}
	\begin{aligned}
		\Re(\h_i^H\w_k)\geq & \sqrt{(\gamma_k^{(l)}-1)\beta_{i,k}^{(l)}}+\frac{1}{2}\sqrt{\frac{\beta_{i,k}^{(l)}}{\gamma_k^{(l)}-1}}(\gamma_k-\gamma_k^{(l)})\\
		&+\frac{1}{2}\sqrt{\frac{\gamma_k^{(l)}-1}{\beta_{i,k}^{(l)}}}(\beta_{i,k}-\beta_{i,k}^{(l)}),\label{Nonconvex1_Taylor}
	\end{aligned}
\end{equation}
where $\gamma_k^{(l)}$ and $\beta_{i,k}^{(l)}$ are the value of the variable $\gamma_k$ and $\beta_{i,k}$ after the $l$-th iteration in the proposed SCA-based algorithm. Similarly, we exploit the first-order Taylor approximation to relax constraint \eqref{Nonconvex2} as
\begin{equation}
	\sum\limits_{k=1}^2\delta_k \geq t^{(l)}\rho^{(l)}+\rho^{(l)}(t-t^{(l)})+t^{(l)}(\rho-\rho^{(l)}),\label{Nonconvex2_Taylor}
\end{equation}
where $t^{(l)}\ \text{and}\ \rho^{(l)}$ denote the values of $t$ and $\rho$ after the $l$-th iteration. Therefore, given the optimized values from the $l$-th iteration, the original problem \eqref{Prob:EE_1} can be approximately transformed at $(l+1)$-th iteration to the following problem:
\begin{subequations}\label{Prob:EE_t4_Taylor}
\begin{align}
\mathop {\max }\limits_{t,\rho,\w,\bm{\gamma},\bm{\delta},\bm{\beta}}  &\quad t\\
 \st\quad  &\Re(\h_i^H\w_k)\geq \textit{FTA}^{(l)}\left(\gamma_k,\beta_{i,k}\right), i\leq k, k=1,2,
\label{Nonconvex14}\\
 &\sum\limits_{k=1}^2\delta_k\geq \textit{FTA}^{(l)}\left(t\rho\right),  \label{Nonconvex24}\\
 &\eqref{QoS_Cons},\eqref{Pmax_Cons},\eqref{pp},\eqref{con_gamma},\eqref{beta}\label{Convex_Cons}\\\nonumber
\end{align}
\end{subequations}
where $\textit{FTA}^{(l)}(\cdot)$ stands for the first-order Taylor approximation of the variable after the $l$-th iteration, i.e., the right hand side of inequalities of \eqref{Nonconvex1_Taylor} and \eqref{Nonconvex2_Taylor}. Thus we propose an SCA-based algorithm to solve the beamforming optimization subproblem (shown as steps 3-8 in Algorithm 1). Note that the Dinkelbach method can be also used to solve the fractional form beamforming optimization subproblem, but it turns out that the SCA-based method can achieve better performance than the Dinkelbach method \cite{HMCuman2019}.

The original transmit beamforming optimization problem \eqref{Prob:EE_t} can be solved by iteratively solving the above problem \eqref{Prob:EE_t4_Taylor}, which is convex. Specifically, we first initialize the optimized variables. To guarantee the feasibility and convergence of the above problem, we chose the initial optimized variables by evaluating the beamforming vectors and satisfying all the constraints. It is important to find the initial values of $\left\{t^{(0)},\rho^{(0)},\w^{(0)},\bm{\gamma}^{(0)},\bm{\beta}^{(0)}\right\}$ since the convergence of the proposed SCA-based method is sensitive to the initial points. To do so, we can solve a simple feasibility problem: $\rm{Find}\{\w|\eqref{QoS_Cons},\eqref{Pmax_Cons}\}$ and denote the obtained solution by $\w^{(0)}$. Then $\gamma_k^{(0)}$ and $\beta^{(0)}$ can be computed by replacing the inequalities of \eqref{gammabeta} and \eqref{beta} with equalities. At last, initial $\rho^{(0)}$ and $t^{(0)}$ can be calculated by $\rho^{(0)}=\frac{1}{\eta}\sum\nolimits_{k=1}^2\|\w_k^{(0)}\|^2+P_c$ and $t^{(0)}=\sum\nolimits_{k=1}^2\gamma_k^{(0)}$. At each iteration, we solve problem \eqref{Prob:EE_t4_Taylor} given by the values of the optimized variables by the last iteration. The progress will terminate until its convergence with the tolerance $\epsilon=0.001$.

\subsection{Phase shift optimization}
In this section, we focus on the phase optimization. The phase optimization problem is generally transformed into a feasibility problem in the existing works \cite{QWuTWC2019,GYangIRSNOMA2020,JZhuIRSNOMA2020}. However, in this work, given by the transmit beamforming vectors $\{\w_k\}$ obtained from the beamforming optimization, the energy efficiency maximization problem \eqref{Prob:EE_1} is reduced to a sum-rate maximization problem:
 \begin{subequations}\label{SubProb:theta}
\begin{align}
\mathop {\max }\limits_{\Th} \quad &  R_1+R_2\label{obj}\\
 \st\ \quad  & R_k\geq R_{k,\min}, k=1,2, \\
& 0\leq\theta_n\leq2\pi, n=1,\cdots,N.
\end{align}
\end{subequations}
Note that \eqref{obj} is a $\max \{\min\{\cdot\}\}$ function due to \eqref{R2}. To address this problem, we first transform it to a tractable one. Let $\h_{k,r}^H\Th\G\w_j=\vv^H\mathbf{a}_{k,j}$ where $\vv=[v_1,\cdots,v_N]^H=[e^{j\theta_1},\cdots,e^{j\theta_N}]^H$ and $\mathbf{a}_{k,j}={\rm{diag}}(\h_{k,r}^H)\G\w_{j}$. 
By introducing slack variable $z_1$, problem \eqref{SubProb:theta} can be equivalently transformed to           
\begin{subequations}\label{Prob:t1}
	\begin{align}
		\mathop {\max }\limits_{\vv,z_1} \quad & z_1\\ 
\st \quad  & R_1+R_2^1\geq z_1,\\
&R_1+R_2^2\geq z_1, \label{nonconvex}\\
 &  \frac{|\vv^H\mathbf{a}_{1,1}|^2}{\sigma^2} \geq \Gamma_{1,\min},\\ 
 &\frac{|\vv^H\mathbf{a}_{k,2}|^2}{|\vv^H\mathbf{a}_{k,1}|^2+\sigma^2} \geq \Gamma_{2,\min}, \forall k,=1,2,\\
& |\vv_n|=1, 0\leq\theta_n\leq2\pi, n=1,\cdots,N.\label{21e}
	\end{align}
\end{subequations}
However, problem \eqref{Prob:t1} is non-convex due to the non-convex constraint \eqref{nonconvex}. To make this problem tractable, we propose Proposition 1 to approximate constraint \eqref{nonconvex} to a convex set by its lower bound. 
\begin{prop}

 Problem \eqref{Prob:t1} can be solved by the following transformed problem \eqref{Prob:t2}. The optimal solution to the transformed problem \eqref{Prob:t2} is guaranteed to be a feasible solution to the original phase optimization problem \eqref{Prob:t1}.
 \begin{subequations}\label{Prob:t2}
	\begin{align}
		\mathop {\max }\limits_{\vv,z_2} \quad & z_2\\ \label{21b}
\st \quad &\frac{|\vv^H\mathbf{a}_{1,1}|^2+|\vv^H\mathbf{a}_{1,2}|^2}{\sigma^2}\geq z_2, \\\label{22a}
& \frac{|\vv^H\mathbf{a}_{2,1}|^2+|\vv^H\mathbf{a}_{2,2}|^2}{\sigma^2}\geq z_2,\\
 &  \frac{|\vv^H\mathbf{a}_{1,1}|^2}{\sigma^2} \geq \Gamma_{1,\min},\\ 
 &\frac{|\vv^H\mathbf{a}_{k,2}|^2}{|\vv^H\mathbf{a}_{k,1}|^2+\sigma^2} \geq \Gamma_{2,\min},  k= 1,2,\\ 
& |\vv_n|=1, 0\leq\theta_n\leq2\pi, n=1,\cdots,N.\label{21e}
	\end{align}
\end{subequations} 
\end{prop}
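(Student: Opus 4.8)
The plan is to eliminate the epigraph variables $z_1$ and $z_2$ and show that, once they are removed, problems \eqref{Prob:t1} and \eqref{Prob:t2} are maximizations over the \emph{same} feasible set $\mathcal{V}=\{\vv:\ |\vv^H\mathbf{a}_{1,1}|^2/\sigma^2\ge\Gamma_{1,\min},\ |\vv^H\mathbf{a}_{k,2}|^2/(|\vv^H\mathbf{a}_{k,1}|^2+\sigma^2)\ge\Gamma_{2,\min},\ k=1,2,\ |v_n|=1\ \forall n\}$, and that the objective of \eqref{Prob:t2} never overstates the achievable sum rate of \eqref{Prob:t1} over $\mathcal{V}$. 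Writing $\{R_2^1,R_2^2\}=\{\log_2(1+\Gamma_{2,1}),\log_2(1+\Gamma_{2,2})\}$, the single fact driving the argument is the SIC ``telescoping'' identity
\begin{equation}
\log_2\!\Big(1+\tfrac{x}{\sigma^2}\Big)+\log_2\!\Big(1+\tfrac{y}{x+\sigma^2}\Big)=\log_2\!\Big(1+\tfrac{x+y}{\sigma^2}\Big),\qquad x,y\ge 0 .
\end{equation}

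First I would apply this identity with $(x,y)=(|\vv^H\mathbf{a}_{1,1}|^2,|\vv^H\mathbf{a}_{1,2}|^2)$. Because $R_1=\log_2(1+|\vv^H\mathbf{a}_{1,1}|^2/\sigma^2)$ and $\Gamma_{2,1}=|\vv^H\mathbf{a}_{1,2}|^2/(|\vv^H\mathbf{a}_{1,1}|^2+\sigma^2)$ involve the \emph{same} quantity $|\vv^H\mathbf{a}_{1,1}|^2$, the left-hand side is \emph{exactly} $R_1+\log_2(1+\Gamma_{2,1})$; hence the first objective-defining constraint of \eqref{Prob:t2} is equivalent, under the monotone change of variable $z_1\leftrightarrow\log_2(1+z_2)$, to the corresponding rate constraint of \eqref{Prob:t1}. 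Next I would apply the identity with $(x,y)=(|\vv^H\mathbf{a}_{2,1}|^2,|\vv^H\mathbf{a}_{2,2}|^2)$. Here the second summand equals $\log_2(1+\Gamma_{2,2})$, but the first summand is $\widetilde R_1:=\log_2(1+|\vv^H\mathbf{a}_{2,1}|^2/\sigma^2)$ rather than $R_1$, so the second objective-defining constraint of \eqref{Prob:t2} reads $2^{\widetilde R_1+\log_2(1+\Gamma_{2,2})}-1\ge z_2$.

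To finish, I would invoke the user ordering, used here in the per-beamformer form $|\h_{1,r}^H\Th\G\w_1|\ge|\h_{2,r}^H\Th\G\w_1|$, i.e. $|\vv^H\mathbf{a}_{1,1}|\ge|\vv^H\mathbf{a}_{2,1}|$, which underlies the chosen SIC order. This gives $\widetilde R_1\le R_1$, hence $2^{R_1+\log_2(1+\Gamma_{2,2})}-1\ge 2^{\widetilde R_1+\log_2(1+\Gamma_{2,2})}-1\ge z_2$, so every $\vv$ feasible for \eqref{Prob:t2} also satisfies the remaining rate constraint $R_1+\log_2(1+\Gamma_{2,2})\ge\log_2(1+z_2)$ of \eqref{Prob:t1}. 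Combining the two objective-defining constraints of \eqref{Prob:t2} then yields $z_2\le 2^{R_1+\min\{R_2^1,R_2^2\}}-1$ at every feasible $\vv$, so $\log_2(1+z_2)$ lower-bounds the true achievable sum rate $R_1+\min\{R_2^1,R_2^2\}$ there — this is the precise sense in which \eqref{Prob:t2} is a lower-bound surrogate of \eqref{Prob:t1}. Finally, because the QoS and unit-modulus constraints of the two problems are identical, the optimizer $\vv^\star$ of \eqref{Prob:t2} belongs to $\mathcal{V}$, and setting $z_1^\star=R_1(\vv^\star)+\min\{R_2^1(\vv^\star),R_2^2(\vv^\star)\}$ yields a point $(\vv^\star,z_1^\star)$ feasible for \eqref{Prob:t1}, which is exactly the claim.

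The main obstacle I anticipate is careful bookkeeping rather than any deep step: one must track the index patterns so that the first combined constraint collapses \emph{exactly} onto $R_1+\log_2(1+\Gamma_{2,1})$ while the second only produces $\widetilde R_1+\log_2(1+\Gamma_{2,2})$, and then notice that it is precisely the NOMA channel ordering that forces the residual gap to point in the favorable direction ($\widetilde R_1\le R_1$) — without it, solving \eqref{Prob:t2} could report a ``sum rate'' exceeding anything feasible in \eqref{Prob:t1}. I would also state explicitly at the outset that the ordering is invoked in the per-beamformer sense, since the paper records it only as a norm inequality on the effective channel vectors.
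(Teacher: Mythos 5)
Your argument is correct and follows essentially the same route as the paper's proof: telescope the NOMA rates so that $R_1+R_2^1$ collapses exactly onto the constraint built from $\mathbf{a}_{1,1},\mathbf{a}_{1,2}$, lower-bound $R_1+R_2^2$ by the telescoped expression in $\mathbf{a}_{2,1},\mathbf{a}_{2,2}$ via the user ordering, and pass between the two problems through the monotone substitution $z_2=2^{z_1}-1$ together with the shared QoS and unit-modulus constraints. The only difference is that you make explicit the inequality actually needed, $|\vv^H\mathbf{a}_{1,1}|\geq|\vv^H\mathbf{a}_{2,1}|$ (i.e. $\Tr(\mathbf{H}_1\mathbf{W}_1)\geq\Tr(\mathbf{H}_2\mathbf{W}_1)$), and flag that it is a per-beamformer assumption not implied by the norm ordering alone, whereas the paper invokes $\Tr(\mathbf{H}_1\mathbf{W}_2)\geq\Tr(\mathbf{H}_2\mathbf{W}_2)$, which appears to be an index slip since the $\w_2$-inequality is not the one that makes $(R_1+R_2^2)_{low}$ a lower bound.
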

\begin{proof}
Since $\Tr(\mathbf{H}_1\mathbf{W}_2)\geq \Tr(\mathbf{H}_2\mathbf{W}_2)$, we have $R_1+R_2^2	\geq  (R_1+R_2^2)_{low}$ where $(R_1+R_2^2)_{low}=\log_2(1+(\Tr(\mathbf{H}_2\mathbf{W}_1)+\Tr(\mathbf{H}_2\mathbf{W}_2))/\sigma^2)$ is the lower bound of $R_1+R_2^2$.	 Thus the solution satisfying  $R_1+R_2^2\geq(R_1+R_2^2)_{low}$ and $(R_1+R_2^2)_{low}\geq z_1 $ must satisfy $R_1+R_2^2\geq z_1$. Let $z_2=2^{z_1}-1$, problem \eqref{Prob:t1} can be transformed to  \eqref{Prob:t2}.
\end{proof}  
Note that problem \eqref{Prob:t2} is non-convex due to the non-convexity of constraint \eqref{21e}. Constraints in \eqref{Prob:t2} can be transformed into quadratic form. Let $\mathbf{A}_{k,j}=\mathbf{a}_{k,j}\mathbf{a}_{k,j}^H, k=1,2$. We have $|\vv^H\mathbf{a}_{1,1}|^2=\Tr(\mathbf{V}A_{1,1})$ where $\mathbf{V}=\vv\vv^H$  satisifying $\mathbf{V}\succeq 0$ and $\rm{Rank}(\mathbf{V})=1$. Due to the non-convexity of Rank-one constraint, SDR can be applied to relax problem \eqref{Prob:t2} to 
\begin{subequations}\label{Prob:Trace}
	\begin{align}
		\mathop {\max }\limits_{\mathbf{V}, z_2} \quad & z_2\\
\st \quad 
 & \Tr(\mathbf{V}\mathbf{A}_{k,1})+\Tr(\mathbf{V}\mathbf{A}_{k,2})\geq z_2\sigma^2,\ k= 1,2, \label{23b}\\
 &\Tr(\mathbf{V}\mathbf{A}_{1,1}) \geq \Gamma_{1,\min}\sigma^2, \label{23c}\\
  &\Tr(\mathbf{V}\mathbf{A}_{k,2}) \geq \Gamma_{2,\min}(\Tr(\mathbf{V}\mathbf{A}_{k,1})+\sigma^2),\ k= 1,2,\label{23d}\\
 &\rm{diag}(\mathbf{V})=\mathbf{1}_{N\times1},\label{23e}\\
&\mathbf{V}\succeq 0,
	\end{align}
\end{subequations}
where $\rm{diag}(\mathbf{V})$ returns a column vector of the main diagonal elements of $\mathbf{V}$, and $\mathbf{1}_{N\times1}$ is a vector with size $N\times 1$ and ones on all elements. Problem \eqref{Prob:Trace} is a semi-definite programming (SDP) problem. Thus it can be solved optimally by the existing solvers such as CVX in Matlab. 

 According to Theorem 3.2 in \cite{YHuangTSP2010Rankone}, we can conclude that the optimal solution $\mathbf{V}^*$ to problem \eqref{Prob:Trace}, must satisfy the following constraint:
\begin{equation}
	{\rm{Rank}^2}(\mathbf{V}^*)+{\rm{Rank}^2}(z_2^*)\leq 6.
\end{equation}
First, we note that ${\rm{Rank}^2}(z_2^*)\leq 1$. When ${\rm{Rank}}^2(z_2^*)=0$, i.e., $t^*=0$, we have $\mathbf{V}^*=0$ which violates constraint \eqref{23b}. Thus, we have ${\rm{Rank}}^2(z_2^*)= 1$ and $\rm{Rank}(\mathbf{V}^*)=1\  \text{or}\  2$. When $\rm{Rank}(\mathbf{V}^*)=1$, $\vv^*$ can be found by singular value decomposition (SVD). Otherwise the Gaussian randomization method can be applied to generate a high quality approximate Rank-one solution. Interestingly, from our simulation results, problem \eqref{Prob:Trace} can always yield a Rank-one solution. This is because that condition \eqref{23b} when $k=1$ can be always included in the case $k=2$ due to $(\mathbf{A}_{1,1}+\mathbf{A}_{1,2})-(\mathbf{A}_{2,1}+\mathbf{A}_{2,2})\succeq 0$. Similar to condition \eqref{23d}, thus we have ${\rm{Rank}^2}(\mathbf{V}^*)+{\rm{Rank}}^2(z_2^*)\leq 4$, and  $\rm{Rank}(\mathbf{V}^*)=1$.
\begin{algorithm}[tp] \small
	\caption{Alternating SCA and SDR based  Algorithm }\label{Alg1}
	\begin{algorithmic}[1] 
		\STATE {{\bf Initialization:} Set $\Th={\Th}_0$ the outer iteration number $l=1$.}
		\STATE {\bf repeat}\\
		\STATE {\bf Beamforming Optimization: }{Initialize the inner iteration number $i=1$ and $\left\{t^{(0)},\rho^{(0)},\w^{(0)},\bm{\gamma}^{(0)},\bm{\beta}^{(0)}\right\}$.}
		\STATE {Given $\Th=\rm{diag}(\vv)$}
		\WHILE {$t^{(i)}-t^{(i-1)}> \epsilon$}
		\STATE {Obtain the optimal solution $\left\{t^{(i)},\rho^{(i)},\w^{(i)},\bm{\gamma}^{(i)},\bm{\beta}^{(i)}\right\}$ to problem  \eqref{Prob:EE_t4_Taylor} with $\left\{t^{(i-1)},\rho^{(i-1)},\w^{(i-1)},\bm{\gamma}^{(i-1)},\bm{\beta}^{(i-1)}\right\}$.}
		\ENDWHILE
        \STATE {{\bf output:} $t^{(l)}=t^{(i)}$ and $\w^{(l)}=\w^{(i)}$.}\\
       \STATE  {\bf Phase Shift Optimization: }
		{Given by $\w^{(l)}$, obtain $\mathbf{V}^*$ by solving problem \eqref{Prob:Trace}}. 
		\STATE {Find $\vv^*$ by SVD of $\mathbf{V}^*$ or the Gaussian randomization method.}
		\STATE Calculate {$EE^{(l)}$} by $\vv^*$ and $w^{(l)}$ and let $l=l+1$.
	\STATE {\bf until} $EE^{(l)}-EE^{(l-1)}<\epsilon$ or problem \eqref{Prob:Trace} is infeasible.
     
	\end{algorithmic}\label{Al1}
\end{algorithm}

\subsection{Algorithm Design}
The details of the proposed alternating optimization algorithm are shown in Algorithm \ref{Alg1}, in which problem \eqref{Prob:EE_t4_Taylor} and problem \eqref{Prob:Trace} are alternatively solved until the convergence metric is triggered. Denote the value of energy efficiency in problem (19) based on a solution ($\w,\Th$) by $EE(\w,\Th)$. If there exists a solution to problem (23), i.e., ($\w^{(l)},\Th^{(l+1)}$), then ($\w^{(l)},\Th^{(l)}$) and ($\w^{(l+1)},\Th^{(l+1)}$) are both feasible solutions to problem (19) in the $(l)$-th iteration and the $(l+1)$-th iteration, respectively. Therefore, we have $EE(\w^{(l+1)},\Th^{(l+1)})\geq EE(\w^{(l)},\Th^{(l+1)})$ and $EE(\w^{(l)},\Th^{(l+1)}) \geq EE(\w^{(l)},\Th^{(l)})$. The first inequation holds because $\w^{(l+1)}$ is the optimized solution to problem (19) for given $\Th^{(l+1)}$. The second inequation holds because the phase shift optimization problem is to maximize the system sum rate, shown as the objective function of problem (23). This increases the numerator of EE. Therefore, Algorithm \ref{Al1} is guaranteed to converge.

\section{Simulation Results}
In this section, numerical results are provided to demonstrate the superior performance of the proposed algorithm compared with IRS-OMA and IRS-NOMA with rand phase schemes. Note that the benchmark IRS-OMA scheme is also optimized by the SCA-based method and the SDR method. The channels involved in this work are Rician fading channels \cite{QWuTWC2019}. The line-of-sight and Rayleigh fading components are set to $G^{los}=G^{nlos}=2$. The path loss exponents of the BS-IRS, IRS-user are respectively set to be 2.2 and 2.5. The distance between the BS and IRS is $d_{BI}=40$ m and the distances between IRS to $U_1$ and $U_2$ are $d_{IU1}=10$ m and $d_{IU2}=20$ m. Without loss of generality, we assume that all users have the same SINR requirement $\Gamma_{k,\min}=10$ dB, $\forall k$. Other parameters are set as $\sigma^2=-80$ dBm and $\eta=0.6$.
\begin{figure}[t]
	\centering	
	\graphicspath{{./figures/}}\includegraphics[width=0.84\linewidth]{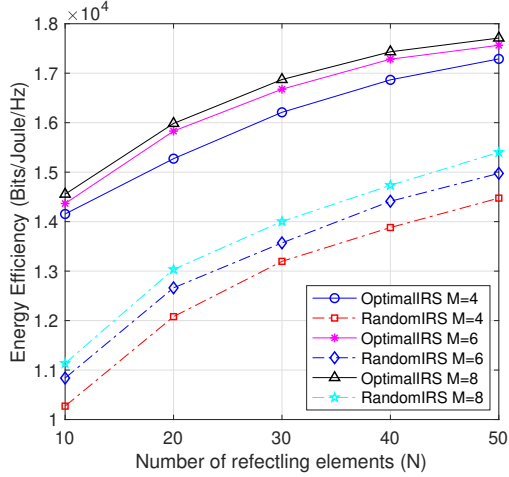}\\
	\caption{EE versus the number of reflecting elements $N$.} \label{Fig1}
\end{figure}
\begin{figure}[t]
	\centering
	\graphicspath{{./figures/}}
	\includegraphics[width=0.84\linewidth]{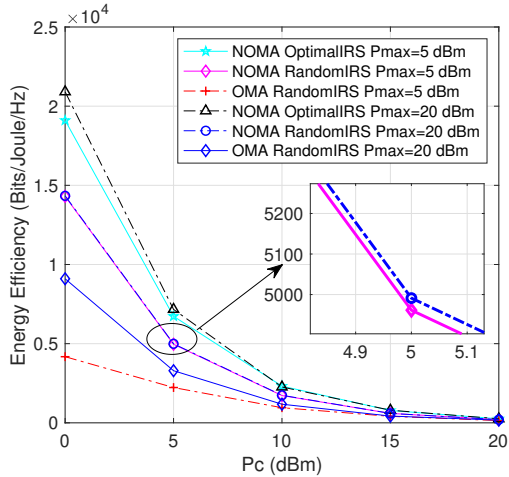}\\
	\caption{EE versus the circuit power $P_c$.} \label{Fig2}
\end{figure}

 Fig. \ref{Fig1} shows the energy efficiency performance versus the number of reflecting elements on the IRS. We set $P_{\max}=10$ dBm and $P_c=10$ dBm. We compare the proposed algorithm with a random phase IRS scheme by considering different numbers of transmit antennas. It can be observed that the energy efficiency of the proposed IRS-NOMA scheme and NOMA with a random phase scheme IRS increases with the number of reflecting elements. Moreover, the proposed scheme always achieves higher energy efficiency than the scheme with a random phase. The performance keeps improving, but the improvement becomes smaller when $M$ increases. This is because the feasible domain of each channel between antennas is narrowed down by increasing $M$ due to the fixed $P_{\max}$. 

Fig. \ref{Fig2} shows the energy efficiency performance versus the circuit power $P_c$ at the BS. We set  $M=20$ and $N=20$. The proposed scheme with different transmit power budgets is considered. It can be shown that the energy efficiency decreases when the circuit power increases. According to the definition of energy efficiency, its value will become smaller when $P_c$ increases. However, the slope of the decreasing curve gets smaller with $P_c$. This is because that the optimization dominates in increasing system energy efficiency when $P_c$ is small. It can also be observed that the MISO IRS-NOMA network with the proposed scheme always outperforms the system with a random phase scheme.  

\section{Conclusion}
In this paper, we proposed an effective approach to improve the energy efficiency for the IRS-NOMA network. Specifically, beamforming and phase shift were alternately optimized to achieve the maximum system energy efficiency. Given the fixed phase shift, the beamforming was optimized by introducing some auxiliary variables and applying SCA. Moreover, we used a lower bound and SDR to optimize the phase shift. Compared with the OMA system and random phase scheme, the proposed algorithm can achieve higher energy efficiency. The proposed scheme can also be extended to the multi-user case but with a different phase optimization scheme.

\bibliographystyle{IEEEtran}

\end{document}